\newcommand{\ie}{{\em i.e., }}
\newcommand{\eg}{{\em e.g., }}
\newtheorem{theorem}{Theorem}
\newtheorem{proof}{Proof}
\newcommand{\ER}{Erd{\"o}s-R{\'e}nyi }
\newcommand{\titlestr}{Generating Connected Random Graphs}
\begin{document}

\title{\titlestr}
\author{Caitlin Gray}
\email{caitlin.gray@adelaide.edu.au}
\affiliation{School of Mathematical Sciences, University of Adelaide}
\affiliation{ARC Centre of Excellence for Mathematical \& Statistical  Frontiers}
\author{Lewis Mitchell}
\affiliation{School of Mathematical Sciences, University of Adelaide}
\affiliation{ARC Centre of Excellence for Mathematical \& Statistical  Frontiers}
\affiliation{DatatoDecisions CRC: Stream Lead}
\author{Matthew Roughan}
\affiliation{School of Mathematical Sciences, University of Adelaide}
\affiliation{ARC Centre of Excellence for Mathematical \& Statistical  Frontiers}

\begin{abstract}
Sampling random graphs is essential in many applications, and often algorithms use Markov chain Monte Carlo methods to sample uniformly from the space of graphs. However, often there is a need to sample graphs with some property that we are unable, or it is too inefficient, to sample using standard approaches. In this paper we are interested in sampling graphs from a conditional ensemble of the underlying graph model. We present an algorithm to generate samples from an ensemble of connected random graphs using a Metropolis-Hastings framework. The algorithm extends to a general framework for sampling from a known distribution of graphs, conditioned on a desired property. We demonstrate the method to generate connected spatially embedded random graphs, specifically the well known Waxman network, and illustrate the convergence and practicalities of the algorithm.
\end{abstract}
\maketitle
\section{Introduction}
Random graphs are commonly used as underlying models in many fields, such as computer networking, biology, social sciences and physics \citep{PayneDodds-09,Bascompte-07,Crucitti-04,Gray-18,Castellano-15}. The ability to generate random graphs with desired properties is crucial, as they may be used in conjunction with complex models, for instance a routing protocol in computer networking \citep{Waxman-88}. 


Real-world networks come with countless properties that one may consider modelling, \eg degree distributions, clustering levels etc. Most random graph models focus on one of these properties to model an observed network. However, many current methods for generating random graphs result in networks with some undesirable properties for a particular applications. 

For instance,
\begin{itemize}
\item the graphs may not be connected, \eg the Gilbert-\ER model 
or spatial Waxman graph \citep{Waxman-88}; or 
\item the graphs may not be \textit{simple}, \ie 
they might have multi-edges or self-loops, \eg the configuration model.
\end{itemize}

While one might argue that this is a modelling problem, there are nevertheless many instances in the literature where a model matches enough properties of the real networks in question that it is useful, except for one deficiency such as noted above.

Examples include:
\begin{itemize}
\item using the Waxman graph to model physical networks that are inherently 
connected, \eg router networks; and
\item using the configuration model that generates graphs with self-loops and multi-edges to model simple networks.
\end{itemize}

Generating connected graphs with a given degree sequence has been discussed at length in the literature using Markov chain Monte Carlo (MCMC) methods \citep{Viger-05,Rechner-17,Nishimura-18,Ying-09}. The existing MCMC algorithms use `edge swaps' to give a uniform sample over the graph space. While this may be useful when requiring only a graph with the desired property, the natural question remains of how to sample graphs while ensuring we maintain the conditional ensemble of the underlying graph model. This is essential in many applications; for example, when estimating parameters, or in applications of Approximate Bayesian Computation where the ensemble encompasses prior knowledge of the system.

We present an algorithm to produce random graphs from a known ensemble conditioned on an extra desired property of the network. Our algorithm uses MCMC methods to sample from the ensemble of interest. In particular, we focus here on generating connected networks. We show the algorithm samples graphs from the desired distribution and demonstrate the algorithm on spatially embedded random networks (SERNs), in particular the Waxman random graph \citep{Waxman-88}. We show that the algorithm is $\mathcal{O}(K)$ for $K$ iterations, and show convergence scales like $\mathcal{O}(N^2)$ in the number of nodes in the graph.

The algorithm not only has practical applications in that one can generate connected graphs for use in various applications, but also, such a simulation algorithm could be used to estimate the probability of such graphs in an ensemble.

\section{Background}

\subsection{Mathematical formalities}

A graph (or network), $G=(V, E )$, consists of a set of $N$ nodes, which, without loss of
generality, we label ${V}=\{1,2,\ldots, N\}$. The graph has edges (or
links) ${E} \subset {V} \times {V}$. We are primarily
concerned here with undirected graphs (though much work on random
graphs is easy to generalise to directed graphs).

We say that a link exists between two nodes $i$ and $j$ if $(i,j) \in {E}$. We say that they are {\em connected} if a path (a sequence of edges) exists between the two nodes. The graph is connected if all pairs of nodes $(i,j)$ are connected.

The well-known Gilbert-\ER (GER) random graph, $G_{n,p}$
of $n$ nodes is constructed by assigning each edge $(i,j)$ to be in
${E}$ independently, with fixed probability $p$ ~\citep{ER-59,Gilbert-59}. 

Spatially embedded random networks (SERNs) stem from the notion that often longer links are more expensive to build or maintain. Therefore, often real world networks display spatial structure, and are used in social and epidemiological modelling \citep{Miller-18, Bringmann-18}. Formally, we create a SERN by placing $N$ nodes uniformly at random within some
defined region $R$ of a metric space $\Omega$ with distance metric
$d(x,y)$. Each pair of nodes is made adjacent independently, with
probability $p_{ij}$, which is a function of $d(x_i,x_j)$. In the Waxman case, 
\begin{equation}
p_{ij}= qe^{-sd_{ij}},
\label{eq:waxman_prob}
\end{equation} 
for $q \in (0, 1]$, $s \ge 0$, and the Euclidean distance $d_{ij}$. The parameter $s$ controls the extent to which spatial structure is incorporated into the graph. Note that when $s=0$ we recover the GER random graph, with edge probability $q$. In general, the $q$ value controls the overall edge density in the graph.
Note that the parametrisation in~(\ref{eq:waxman_prob}) differs from much
of the literature on Waxman graphs. We chose to do this as
unfortunately, the parameters $(\alpha, \beta)$ used traditionally
have become confused by frequent reversal. 

The basic properties of the Waxman graph can be derived. For instance,
it is shown \citep{Roughan-15} that the average node degree is given by
\begin{equation}
  \label{eq:node_degree}
  \bar{z} = (n-1) q \tilde{G}(s),
\end{equation}
where $\tilde{G}(s)$ is the Laplace transform of the probability density function between a pair of random points (the Line-Picking Problem), see references for further details \citep{Roughan-15, Ghosh-51}. The Waxman is just one example of a SERN, and we use it here to provide a simple and clear example. Results generalise to other SERNs.

\subsection{Markov chain Monte Carlo}
Markov chain Monte Carlo (MCMC) methods are widely used to sample from complex probability distributions that are difficult to generate directly. These approaches generate Markov chains that converge to the distribution of interest. 

Specifically, we use the Metropolis-Hastings (M-H) algorithm \citep{Metropolis-49, Hastings-70}, given in Algorithm \ref{MH_algorithm}, to draw samples from our distribution of interest, namely, the distribution of networks with our desired property. 
\begin{new_algorithm}
  \begin{algorithmic}[1]
  \STATE Set $\theta^{(0)}$
  \FOR {$t=1...K$}
    \STATE Generate $\theta' \sim Q(\theta' | \theta^{(t-1)})$
	\STATE Take 
			$\theta^{(t)}  = \left\{
       \begin{array}{ll}
          \theta', & \mbox{ with probability } \alpha \\
          \theta^{(t-1)},    & \mbox{ with probabiltiy } 1-\alpha.
       \end{array}
     \right.$  \
	
where $\alpha = \text{min}\left(1,\frac{\pi(\theta')Q(\theta \vert \theta')}{\pi(\theta) Q(\theta' \vert \theta)}\right) $
    	\ENDFOR
    \caption{General Metropolis-Hastings algorithm \citep{Robert-04}.}
    \label{MH_algorithm}
  \end{algorithmic}
\end{new_algorithm}

Consider the target distribution $\pi(\theta)$ we wish to sample from. We use the M-H algorithm to create a Markov chain $\theta^{(1)}, \theta^{(2)}, \cdots$. To do so, we choose a proposal distribution $Q(\theta' \vert \theta)$ to propose the next candidate $\theta'$ from the current state $\theta$. The proposal distribution must be able to explore the entire space in a finite number of steps \citep{Robert-04}.

The proposed parameter value $\theta'$ is accepted with some probability given by, in the case of M-H, the acceptance probability
\begin{equation*}
\alpha = \text{min}\left(1,\frac{\pi(\theta')Q(\theta \vert \theta')}{\pi(\theta) Q(\theta' \vert \theta)}\right).
\end{equation*}
If the proposal distribution is symmetric then
\begin{equation*}
\alpha = \text{min}\left(1,\frac{\pi(\theta')}{\pi(\theta)}\right).
\end{equation*}
The chain is generated from the proposed parameter $\theta'$ as follows
\begin{equation*}
\theta^{(t+1)}
     = \left\{
       \begin{array}{ll}
          \theta', & \mbox{ if accepted}, \\
          \theta^{(t)},    & \mbox{ otherwise},
       \end{array}
     \right.
\end{equation*}
where $\theta'$ is generated from $Q(\theta' \vert \theta^{(t)})$.

Markov chain traversals of graphs have been used to sample from a variety of spaces \citep{Fosdick-18}. MCMC methods are also widely used to sample exponential random graphs \citep{Lusher-12}, and there has been much focus on generating networks that have a desired degree sequence \citep{Stone-05,Tabourier-11,Cooper-07}. This is achieved through the use of an `edge swaps' proposal distribution that preserves the degree sequence of the network throughout the MCMC process. Much of this work focusses on the configuration model; that is, the uniform sampling of networks with a given degree sequence. These have applications when using the configuration model directly or as null models \citep{Stone-05}. Other works sample uniformly from graphs with power-law distributions in a similar manner \citep{Gkantsidis-03}. Uniform sampling can be useful in some situations; however, we are often interested in sampling from a model with a more complicated underlying distribution, and in ensuring we do not oversample rare graphs. Therefore, here we focus on sampling from spaces of graphs that have a non-uniform distribution. Recently, the `edge-switch' proposal in MCMC methods have been used to sample bipartite graphs with only expected degrees that provide a framework to study partially observed networks \citep{Rechner-17}, and the extension of the double swap to a triple swap to allow sampling of `loopy' graphs \citep{Nishimura-18}. Another related work, \citep{Ying-09}, uses link switches to generate synthetic networks preserving properties of a real graph input with privacy and significance testing applications.

\subsection{Connectedness}
We present our algorithm in the context of generating connected random networks. The property of connectedness is often observed in physical networks, such as a telecommunications network, where there is the requirement that a path exist between all nodes. Other physical examples include the Internet routing network. It is also important in the application of social networks. In general each individual may not be connected to all others through some path. However, in the application of epidemics and information diffusion there is particular interested in the network over which information propagates. To participate in a cascade the individual must have come into contact with the contagion; therefore, there is necessarily a path between all individuals in the network over which the cascade is observed.

Many random graph generators do not consider connectivity and simply take the giant component of the resulting graphs or prove properties like the distribution of connected component size in the asymptotic limit. However, in many applications we are interested in generating connected networks of fixed size from our distribution.

Rejection sampling is commonly used to generated networks that display a desired property by simply rejected graphs that do not display this property. While appropriate in some cases, there are many situations in which this method is extremely slow. For example, rejection sampling of simple graphs from the configuration model may be exponential in the size of the graph for some degree sequences \citep{Fosdick-18}. For connectedness, the probability of all nodes being connected can be very low even for quite reasonable parameter values, and so rejection sampling is often not practical. While the probability of connectedness has not been found analytically for Waxman graphs, simple simulations can show that connected graphs are often unlikely. Figure \ref{Waxman_connectedness} shows the proportion of Waxman graphs that are connected after 200 samples, for a variety of parameters, and we can see that as the dependence on distance becomes stronger ($s$ parameter increasing) the probability of connectedness decreases. Additionally, the traditional $\mathcal{O}(N^2)$ sampler makes running even a few hundred samples of the Waxman expensive.

Markov chain methods have been used to produce connected random networks with a prescribed degree sequence \citep{Viger-05,Fosdick-18}, with a particular focus on with a networks in peer-to-peer applications \citep{Cooper-07}. 
\begin{figure}
\centering
\includegraphics[width=0.8\columnwidth]{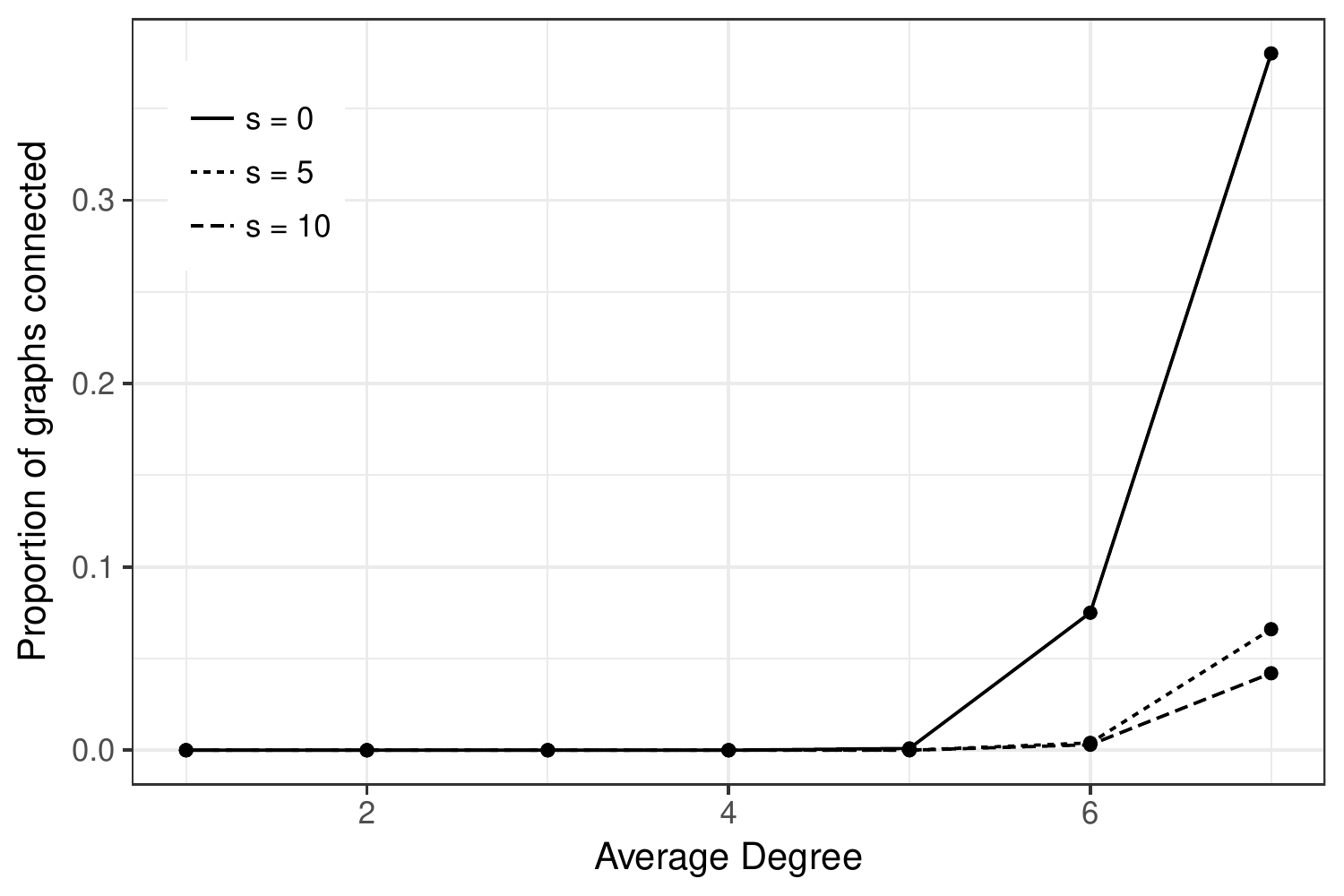}
\caption{Proportion of connected networks in 1000 samples of a Waxman network with $N = 1000$.}
\label{Waxman_connectedness}
\end{figure}

\section{Generating Connected Graphs}
We assume a random graph model that generates an ensemble of sometimes unconnected graphs, and that the model provides a probability distribution across the ensemble, \ie the probability $P(G)$ for each graph $G$. Even if we assume that this probability is calculable, direct simulation from the distribution is usually intractable due to the size of the ensemble. Usually, there is an algorithm to generate graphs from the ensemble.

Given the model, we would like to generate connected graphs with the same conditional probability distribution as the model of interest, \ie we would like to generate connected graphs $G$ with probabilities 
\begin{equation*}
P\{G | G \mbox{ is connected} \}
    = \frac{P\{ G \mbox{ and } G \mbox{ is connected} \}}{P\{ G \mbox{ is connected}\}},
\end{equation*}  
where the numerator is given by:
\begin{equation*}
P\{ G \mbox{ and } G \mbox{ is connected} \}
     = \left\{
       \begin{array}{ll}
          P(G), & \mbox{ for $G$ connected } \\
          0,    & \mbox{ otherwise}.
       \end{array}
     \right.
\end{equation*}

The required connected random graphs are samples from the unknown conditional probability distribution $P\{ G | G \mbox{ is connected}\}$. This leads naturally to the use of well known MCMC methods as the basis for the sampling algorithm.

We implement the Metropolis-Hastings method to generate a Markov chain that will result in samples from the desired distribution. The algorithm produces a new graph $G' = (V, E')$ based on the old graph $G$. The two main components are a symmetric proposal distribution that can explore the entire space and a tractable acceptance ratio. 

We initialise the algorithm using the underlying model to create a random graph, $G^{(-1)}$, with $P(G^{(-1)})>0$. This network is connected by adding arbitrary links. The graph need not be necessarily chosen with the correct probability, so in this case almost any procedure to obtain connectivity is adequate. Whichever connectivity procedure is used leads to a connected random graph $G^{(0)}$ used as the input to the M-H algorithm.

The process described in detail below. 

\begin{new_algorithm}[t]
  \begin{algorithmic}[1]  
    \STATE Generate $G^{(-1)}$ from the model
    \STATE Connect $G^{(-1)}$ to get $G^{(0)}$
    \FOR{k=1..K}
       \STATE Generate a random edge $(i,j)$
       \IF{$(i,j) \in E$}
       	  \STATE Remove the edge:   $E' = E \, \backslash (i,j)$
       	  \IF {$G'$ is connected}
          \STATE accept $G'$ with probability $P(G')/P(G)$
          \ELSE 
          \STATE reject $G'$
          \ENDIF
       \ELSE
          \STATE Add edge: $E' = E \cup (i,j)$
          \STATE accept $G'$ with probability $P(G')/P(G)$
       \ENDIF
    \ENDFOR
  \end{algorithmic}
      \caption{Metropolis-Hastings method for generating connected graphs.}
    \label{alg:mh_general}
\end{new_algorithm}
\textbf{Step 1 - Proposal: } The probability density $Q(G' | G)$, is the proposal distribution that gives the next candidate for the algorithm. An advantageous feature of $Q$ for the M-H algorithm is that it be symmetric, \ie $Q(G'|G) = Q(G|G')$, as this simplifies the acceptance ratio. 
  
Here we perform the algorithm link by link. At each step, we select a node pair $(i,j)$ at random, and consider adding or removing a link to obtain the new network. In practice we choose two distinct nodes at random and consider the possible link between them.
  
  Mathematically,
  \begin{enumerate}
  \item if $(i,j) \in E$ then $E' = E \, \backslash (i,j)$,
  \item if $(i,j) \not\in E$ then $E' = E \cup (i,j)$.
  \end{enumerate}

All node pairs are chosen with equal probability, so $Q(G'|G) = 1/(N(N-1))$ for all $G$ and $G'$ that differ by one link. Therefore, the transition is symmetric.

This proposal has been used in graph sampling previously, notably in applications related to sampling exponential random graphs, \eg \citep{Lusher-12}, and there is no consideration of connectivity in this step. 

\textbf{Step 2 - Acceptance: } The Metropolis-Hastings acceptance ratio (the probability of accepting the proposed transition) given that the proposal is symmetric is given by
\begin{equation}
 \alpha = \min \left\{ 1, \frac{P\{G' | G' \mbox{ is connected} \}}{P\{G | G \mbox{ is connected} \}} \right\}.
 \label{eq:acceptance_long}
 \end{equation}

If the proposed graph has a higher probability than the previous graph we accept the move. If not, we accept with some probability dependent on the ratio of the two graph probabilities. However, the ratio is intractable in this form, as we cannot calculate $P\{G | G \mbox{ is connected} \}$.

To determine a tractable acceptance ratio, we consider the connectivity of each proposed graph. Recall, we start with a valid connected graph $G^{(0)}$. 
If $G'$ is unconnected, then $P\{G' | G' \mbox{ is connected} \} = 0$, so unconnected graphs will never be accepted; therefore, we remain in the space of connected graphs.

We use this to establish a tractable ratio. When $G$ and $G'$ are connected, the conditionals can be dropped from the probabilities, as $P\{ G \mbox{ is connected}\}$ is constant over the ensemble.

This gives
\begin{equation}
\label{eq:acceptance_prob} 
 \alpha = \min \left\{ 1, \frac{P(G' )}{P(G )} \right\},
\end{equation}
for connected graphs $G$ and $G'$. The ratio is tractable in many cases where we can calculate the ratio of the probability distributions. If all edges are independent then this can be calculated very quickly.

The process is iterated a number of times until the Markov chain converges and the networks are being sampled from the stationary distribution of interest. 

To implement this algorithm we must check the connectivity of the graph when removing a link. There are a variety of algorithms for checking connectivity \citep{Eppstein-97}. We use a simple breadth first search with complexity $\mathcal{O}(N+|E|)$, as we are interested in sparse graphs with $|E| \sim \mathcal{O}(N)$, meaning the search is $\mathcal{O}(N)$. After removing a link $(i,j)$ the graph remains connected if and only if a path still remains between $i$ and $j$. Therefore, determining if the graph still has a path between $i$ and $j$, although still $\mathcal{O}(N)$, is likely to be faster than the worst case, especially on spatial graphs.

This algorithm will work well on networks where each edge exists independently of any other, \eg the GER graph, inhomogenous random graphs \citep{Bollobas-07}, or SERNs. In these cases the calculation of $P(G')/P(G)$ is a simple ratio of edge probabilities. In principle this algorithm can be applied to any model in which every graph has positive probability prior to adding that extra constraint, although $P(G')/P(G)$ may be hard to calculate. Note also that the algorithm, as described here, will work only for graph models that assign positive probability to graphs with a different number of edges. For example, the configuration model network has a fixed degree sequence, hence a fixed number of edges. Therefore, the proposal of adding or removing a single edge will be inappropriate as it will break the degree sequence. A simple change of proposal distribution allows for sampling from these networks \citep{Viger-05,Fosdick-18}

\section{Theoretical Convergence}
\label{theoretical_convergence}
\begin{theorem}
\label{Thm:convergence}
Algorithm \ref{alg:mh_general} generates samples from the random graph ensemble with probability distribution $P\{G | G \mbox{ is connected} \}$.
\end{theorem}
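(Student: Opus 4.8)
The plan is to regard Algorithm~\ref{alg:mh_general} as an ordinary Metropolis--Hastings chain on the finite state space $\mathcal{C}$ of all connected labelled graphs on the $N$ nodes that receive positive probability under the underlying model, and then to invoke the ergodic theorem for finite Markov chains: it suffices to check that the chain never leaves $\mathcal{C}$, that the target $\pi(G)=P\{G\mid G\text{ is connected}\}$ is stationary, and that the chain is irreducible and aperiodic on $\mathcal{C}$. Closure of $\mathcal{C}$ is the easy part: $G^{(0)}\in\mathcal{C}$ by construction, an \emph{add} proposal cannot disconnect an already-connected graph, a \emph{remove} proposal is accepted only after an explicit connectivity test, and a move to a graph of probability $0$ has acceptance ratio $0$; thus every accepted transition stays in $\mathcal{C}$, while rejected (disconnecting or zero-probability) proposals merely add to the holding probability at the current state.

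Stationarity I would get from detailed balance, $\pi(G)\,P(G\to G')=\pi(G')\,P(G'\to G)$, which is immediate when $G=G'$ or when $G$ and $G'$ differ in more than one edge (both sides vanish), leaving only the case $G,G'\in\mathcal{C}$ with $G'=G\cup\{e\}$. Here the proposal is symmetric, $Q(G'\mid G)=Q(G\mid G')=1/(N(N-1))$, and \emph{neither} of the moves $G\to G'$ (an addition, automatically connected) nor $G'\to G$ (a removal whose result $G$ is connected by assumption) is ever rejected for connectivity, so $P(G\to G')=\frac{1}{N(N-1)}\min\{1,P(G')/P(G)\}$ and $P(G'\to G)=\frac{1}{N(N-1)}\min\{1,P(G)/P(G')\}$. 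Writing $\pi(G)=P(G)/Z$ with $Z=P\{G\text{ is connected}\}$, both sides collapse to $\min\{P(G),P(G')\}/(Z\,N(N-1))$; the conditioning constant $Z$ cancels --- precisely the cancellation already used in passing from~(\ref{eq:acceptance_long}) to~(\ref{eq:acceptance_prob}) --- so $\pi$ is stationary.

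For irreducibility I would route everything through the complete graph $K_N$: from any $G\in\mathcal{C}$ one adds the missing edges one at a time, each step staying connected and accepted with positive probability; and for any target $H\in\mathcal{C}$ with spanning tree $T\subseteq H$, one deletes the edges of $K_N\setminus H$ from $K_N$ in any order, every intermediate graph containing $T$ and hence connected, each deletion again accepted with positive probability --- so $G\to K_N\to H$ is a positive-probability path. Aperiodicity is then clear, since, e.g., at any spanning tree every \emph{remove} proposal is rejected, giving that state positive holding probability. A finite, irreducible, aperiodic chain with stationary distribution $\pi$ converges to $\pi$ from any starting state, in particular from the connected $G^{(0)}$, which is the claim. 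The point that needs genuine care, rather than routine Metropolis bookkeeping, is exactly this interaction with the connectivity constraint: that disconnecting proposals can be folded into the holding probability without upsetting detailed balance, and that the connected graphs on $N$ nodes remain mutually reachable through connected intermediates, for which the spanning-tree deletion argument is the key device.
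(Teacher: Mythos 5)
Your proposal is correct and follows essentially the same route as the paper: the Metropolis--Hastings construction supplies stationarity of $P\{G \mid G \text{ is connected}\}$, irreducibility is obtained by routing any connected $G$ to any connected $H$ through the complete graph, and aperiodicity comes from the positive holding probability created by rejected edge removals. You are somewhat more careful than the paper in two places --- you verify detailed balance explicitly (including the cancellation of the normalising constant $P\{G \text{ is connected}\}$) and you justify via a spanning tree of $H$ why every intermediate graph in the deletion phase remains connected, a point the paper leaves implicit --- but these are refinements of the same argument rather than a different one.
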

\begin{proof}
Theorem (7.4) of Robert \& Casella \citep{Robert-04} states that the chain produced by the Metropolis-Hastings algorithm (Algorithm \ref{MH_algorithm}) converges to the stationary distribution $\pi$ if:
\begin{enumerate}
\item it is irreducible, and
\item it is aperiodic.
\end{enumerate}

Consider the Markov chain produced by Algorithm \ref{alg:mh_general}. We show there exists a sequence of a finite number of steps with positive probability from any connected graph $H$ to any connected $H'$, \ie~$P(H~\rightarrow~H')>0$.
We must ensure that the graph remains connected in all steps. Therefore, consider adding all edges not in $H$ to create a clique. Then remove the edges in subsequent steps to reach~$H'$.

\begin{equation*}
P(H \rightarrow H_{\mathrm{clique}} \rightarrow H') = P(H \rightarrow H_{\mathrm{clique}} ) P(H_{\mathrm{clique}} \rightarrow  H').
\end{equation*}
If every connected graph in the ensemble has non-zero probability, both terms on the RHS have positive probability. Therefore, the chain is \textbf{irreducible}.

A sufficient condition for for the Markov chain to be aperiodic is to choose $Q$ such that the probability of the event $\left\{X^{(t+1)} = X^{(t)}\right\}$ is non-zero for some state. If the removal of an edge destroys connectivity the transition is rejected and the chain remains in the current state. Therefore, the chain is \textbf{aperiodic}.

Note that the acceptance probability construction ensures $\pi = P\{G | G \mbox{ is connected} \}$. Hence, by Theorem (7.4) of Robert \& Casella, Algorithm \ref{alg:mh_general}, with acceptance probability $\alpha$ (\ref{eq:acceptance_prob}) converges to the distribution of interest.

\end{proof}

Unfortunately this result only assures us that after infinite time the process will be sampling from the distribution of interest. We show evidence for convergence in finite time in Section~\ref{sec:convergence}.

\section{Complexity}
\label{complexity}
\begin{theorem}
Algorithm \ref{alg:mh_general} with $K$ iterations has computational complexity $\mathcal{O}(K)$, independent of the size of the graph, for sparse graphs.
\end{theorem}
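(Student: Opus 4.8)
The plan is to bound the expected work performed in one pass of the \textbf{for} loop of Algorithm \ref{alg:mh_general} and then multiply by $K$. First I would enumerate the per-iteration operations: drawing a random node pair $(i,j)$ is $\mathcal{O}(1)$; testing the membership $(i,j)\in E$ is $\mathcal{O}(1)$ if the edge set is held in a hash table alongside adjacency lists; forming $E'$ by inserting or deleting that single edge is $\mathcal{O}(1)$; and, because every edge is independent under the models of interest (GER, inhomogeneous random graphs, SERNs), the acceptance ratio $P(G')/P(G)$ collapses to a ratio of two edge probabilities and is therefore $\mathcal{O}(1)$ to evaluate. The only operation that is not $\mathcal{O}(1)$ is the connectivity test triggered when an edge is removed: a breadth-first search costs $\mathcal{O}(N+|E|)$, which for a sparse graph with $|E|=\mathcal{O}(N)$ is $\mathcal{O}(N)$. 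Taken at face value this gives only $\mathcal{O}(KN)$.

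To reach the claimed $\mathcal{O}(K)$ I would argue in expectation. Conditioning on the current graph $G$, the randomly chosen pair $(i,j)$ lies in $E$ — the only case that triggers the $\mathcal{O}(N)$ search — with probability $|E|/(N(N-1))$; for sparse graphs $|E|=\mathcal{O}(N)$, so this probability is $\mathcal{O}(1/N)$. Hence the expected cost of a single iteration is
\begin{equation*}
\left(1-\mathcal{O}(1/N)\right)\cdot\mathcal{O}(1) + \mathcal{O}(1/N)\cdot\mathcal{O}(N) = \mathcal{O}(1),
\end{equation*}
independent of $N$. Summing over the $K$ iterations and using linearity of expectation yields an expected running time of $\mathcal{O}(K)$, which is the statement of the theorem. (The early-termination observation for spatial graphs — after deleting $(i,j)$ the search only needs to rediscover a single path between $i$ and $j$ — merely improves the constant and is not required for the bound.)

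The step I expect to be the main obstacle is making precise the assertion that the chain stays sparse, i.e. that $|E|=\mathcal{O}(N)$ at \emph{every} visited state and not merely at stationarity, so that the $\mathcal{O}(1/N)$ bound on the removal probability is legitimate throughout the run. One way to handle this is to note that $G^{(0)}$ is sparse by construction and that the Metropolis acceptance ratio (\ref{eq:acceptance_prob}) penalises edge additions for the sparse ensembles considered, keeping the expected edge count $\mathcal{O}(N)$; a cleaner alternative is simply to state the complexity result conditionally on the event that the visited graphs remain sparse, which is the regime to which the theorem is explicitly restricted. A secondary, more routine point is to fix the data structures so that edge-membership queries, single-edge updates, and the probability-ratio computation are genuinely $\mathcal{O}(1)$ while still supporting the $\mathcal{O}(N+|E|)$ breadth-first search; adjacency lists together with a hash set of edges suffice.
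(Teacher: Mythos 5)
Your proof is correct and takes essentially the same approach as the paper: the only non-constant-cost operation is the breadth-first connectivity check, which is triggered only when the sampled pair is an existing edge (probability $\mathcal{O}(1/N)$ for a sparse graph) and costs $\mathcal{O}(N)$, so the expected per-iteration cost is $\mathcal{O}(1)$ and the total is $\mathcal{O}(K)$. Your explicit concern about whether the chain remains sparse at every visited state is a legitimate gap that the paper's own proof simply asserts away, so your version is, if anything, the more careful one.
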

\begin{proof}

We use a neighbourhood list stored in a hash map to describe the edges in the graph. This results in expectedly $\mathcal{O}(1)$ operations to check edges for existence and add/remove edges at each iteration. We check for connectivity when edge removal is proposed. The breadth first search algorithm is $\mathcal{O}(N)$ for a sparse network with $N$ nodes. For a sparse graph the number of edges is $\mathcal{O}(N)$, and so the probability of selecting an edge to delete is $\mathcal{O}(1/N)$. That is, for large $N$
\begin{equation*}
P\{\text{edge }(i,j)\text{ exists}\} \sim \frac{1}{N}.
\end{equation*}


So, the probability there exists an edge between the two chosen nodes, requiring the $\mathcal{O}(N)$ connectedness routine, decreases like $1/N$. Therefore, each iteration is on average $\mathcal{O}(1)$, and overall the algorithm is  $\mathcal{O}(K)$ in the number of iterations.

\end{proof}

%
%

\section{SERN example}
Here we present the example of spatially embedded networks to demonstrate the algorithm.

Edges in a SERN are independent (conditional on distance), and hence the probability distribution of a spatially embedded random network is given by
\begin{equation}
   P(G) = \prod_{(i,j) \in E} p_{ij}   \prod_{(i,j) \not\in E} (1 - p_{ij}) ,
   \label{eq:prob_dist_SERN}
\end{equation}
where $p_{ij}$ is the probability of an edge for the specific SERN of interest. For example, in the case of a Waxman network the edge $(i,j)$ is given by
\begin{equation*}
p_{ij} = qe^{-sd_{ij}},
\end{equation*}
for nodes separated by distance $d$. In the Waxman formulation, $d$ is calculated by the Euclidean distance.

Using~(\ref{eq:prob_dist_SERN}) above, the acceptance probability when adding a link $(i, j)$ becomes
\begin{equation*}
\frac{P(G')}{P(G)} = \frac{p_{ij}}{1-p_{ij}},
\end{equation*}
and for removing a link is
\begin{equation*}
\frac{P(G')}{P(G)} = \frac{1-p_{ij}}{p_{ij}}.
\end{equation*}

While the probability distribution of the ensemble is known, it is often difficult in practice to determine the value of $P(G)$ explicitly. Here, we only require the ratio of the probabilities between each pair of graphs, a much easier calculation.

Often, we assume that we are dealing with sparse graphs. Dense graphs are more likely to be connected, and so would not require this algorithm. Additionally, in physical networks there is a cost constraint of constructing links, resulting in many sparse real-world networks. 

\subsection{Single link Markov chain: Waxman}
\label{relaxation}
Theorem \ref{Thm:convergence} guarantees convergence in infinite time; however, to be practical we would like it to mix in a reasonable number of steps. We would like estimate the number of iterations $K$ required to have evidence that the Markov chain has sufficiently converged to the stationary distribution. This will depend on the number of nodes in the graph, and, for now, we assume independence between edges.



Our method is performed by proposing a change to a single node pair in each step. Therefore, let us consider that we choose a node pair $(i,j)$ in the graph $G$ with probability $\delta$. In this case we choose all node pairs with equal probability, \ie $\delta=1/(N(N-1))$. While we will analyse one node pair, by choosing a link in the graph with probability $\delta$ we are considering the graph as a whole.

Figure \ref{figure of markov chain} shows the transition probabilities of one node pair. Note that the probability of remaining in the state is through two processes; either not choosing the node pair or choosing the link and not accepting the change. \ie $ 1-\delta + \delta(1-\alpha) =1-\delta\alpha$. For each of the node pairs the probability of accepting a change is

\begin{eqnarray*}
\alpha_{\text{a}}^{ij} = \text{min }\left(1, \frac{p_{ij}}{1-p_{ij}}\right) & \text{if adding}, \\
\alpha_{\text{r}}^{ij} = \text{min }\left(1, \frac{1-p_{ij}}{p_{ij}}\right) & \text{ if removing.} 
\end{eqnarray*}
\begin{figure}
\begin{center}
\includegraphics[scale=0.3]{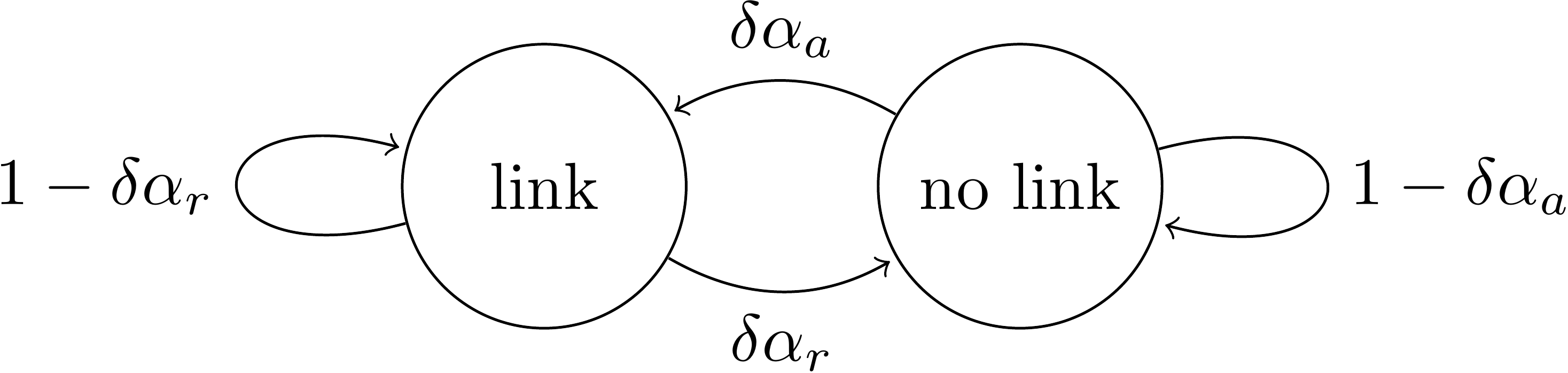}
\end{center}
\caption{Single link in the Markov Chain.}
\label{figure of markov chain}
\end{figure}

Combining the probability of choosing the node pair $(i,j)$ and the transition probabilities, the transition matrix of node pair $(i,j)$ is
\begin{center}
\begin{equation}
P^{ij} = 
\begin{blockarray}{ccc}
\text{link} & \text{no link} \\
\begin{block}{(cc)c}
1-\delta\alpha_r^{ij} & \delta\alpha_r^{ij} & \text{link}\\
\delta\alpha_a^{ij}  & 1-\delta\alpha_a^{ij} & \text{  no link} \\
\end{block}
\end{blockarray}
.
\label{eq:transition}
\end{equation}
\end{center}

In the limit this converges to the stationary probability of a link between nodes $i$ and $j$
\begin{eqnarray*}
p(\text{link}) &=& \frac{\alpha_a}{\alpha_a + \alpha_r},\\
& = & \frac{\text{min }\left(1, \frac{p_{ij}}{1-p_{ij}}\right)}{\text{min }\left(1, \frac{p_{ij}}{1-p_{ij}}\right) + \text{min }\left(1, \frac{1-p_{ij}}{p_{ij}}\right)},\\
& = &  \left\{
       \begin{array}{ll}
          \frac{\frac{p_{ij}}{1-p_{ij}}}{\frac{p_{ij}}{1-p_{ij}} + 1}, & \mbox{ if $p < 0.5$}, \\
          \frac{1}{1 + \frac{1-p_{ij}}{p_{ij}}}    & \mbox{ if $p \geq 0.5$},
       \end{array} 
       \right. \\
 & = & p_{ij} .
\end{eqnarray*}

Hence, the MCMC process will produce a network with the required link probability.

To extend this to the connected case of sampling from $P\{G | G \mbox{ is connected} \}$ we note that the probability of removing a link and moving into a `no link' state where the network is disconnected is zero. As we never start in this absorbing state (the initial network is always connected), the connected system of interest is equivalent to the simplified case presented above.

The mixing of the Markov chain is important in the application of the algorithm in finite time. The spectral gap controls the rate of exponential decay to equilibrium and the relaxation time gives an indication of how fast the chain converges. The two eigenvalues of the transition matrix~(\ref{eq:transition}) are $\lambda_1 = 1$ and $\lambda_2 = 1-\delta\alpha_r - \delta\alpha_a$, giving a spectral gap of $\gamma^* = \delta(\alpha_r + \alpha_a)$. Note that we select edge $(i,j)$ with probability $\delta\sim1/N^2$ and $\alpha_r + \alpha_a$ is constant for any given link.

The relaxation time is given by,
\begin{eqnarray*}
t_{\mathrm{rel}} &=& \frac{1}{\gamma^*}, \\
& = &  \frac{1}{\delta(\alpha_r + \alpha_a)},
\end{eqnarray*}

where
\begin{eqnarray*}
\alpha_r + \alpha_a &=&  \left\{
       \begin{array}{ll}
          \frac{1}{1-p_{ij}}, & \mbox{ if $p < 0.5$}, \\
          \frac{1}{p_{ij}}    & \mbox{ if $p \geq 0.5$},
       \end{array} 
       \right. \\
       & \in & [1,2].
\end{eqnarray*}
 
Therefore, in general, $t_{\mathrm{rel}} \sim N^2$ for the graph, and we expect that $K \sim \mathcal{O}(N^2)$ for the algorithm to converge.

Above we assume that node pair transitions are independent. However, when consider connectedness, the presence or absence of other edges may prevent a particular edge being removed. This will increase the mixing time of the chain as it is possible that the most likely path from one graph to another travels through some unconnected graph. Nevertheless, this analysis gives us a lower bound on and an intuition about the mixing time of our algorithm. To investigate the real mixing time we next turn to the practicalities of implementing the algorithm and investigate the convergence.

\section{Implementation}
\label{sec:convergence}
Section~\ref{theoretical_convergence} showed that Algorithm \ref{alg:mh_general} will converge to the distribution of connected Waxman graphs in infinite time, but we expect approximate convergence in $K \sim \mathcal{O}(N^2)$ steps. The critical question becomes, how long is required in practice?

We implement the algorithm described above using the NetworkX package in Python 2.7.13 \citep{Python} to produce connected SERNs.

In order to simulate networks in finite time we must provide evidence for the convergence of the chain. Many applications of MCMC use visual means to determine when the chain seems to have converged. Here we use a heuristic that uses statistics of the graph. 

Summary statistics are often used to describe network ensembles. Here we utilise the distributions of two summary statistics over the ensemble to determine convergence, the distribution of average degrees and average path length. When we condition on connectedness, we expect a slight increase in average degree to allow for connectedness. This results in a shift in the distribution of the average degree over the ensemble. Conversely, we expect the average path length to decrease as the starting graph $G^{(0)}$ will have longer links than a typical Waxman graph (as we added random links to connect the graph). Note that the average edge length has particular significance in SERNs, and with the average number of edges (closely related to average degree) creates a minimal set of sufficient statistics for the parameters of the Waxman graph \citep{Roughan-15}. After convergence we expect no change in the distribution of summary statistics of the network as they are being drawn from the same underlying distribution. We investigate the change in these statistics to provide evidence for convergence.
\begin{figure}
\centering
\includegraphics[width=0.8\columnwidth]{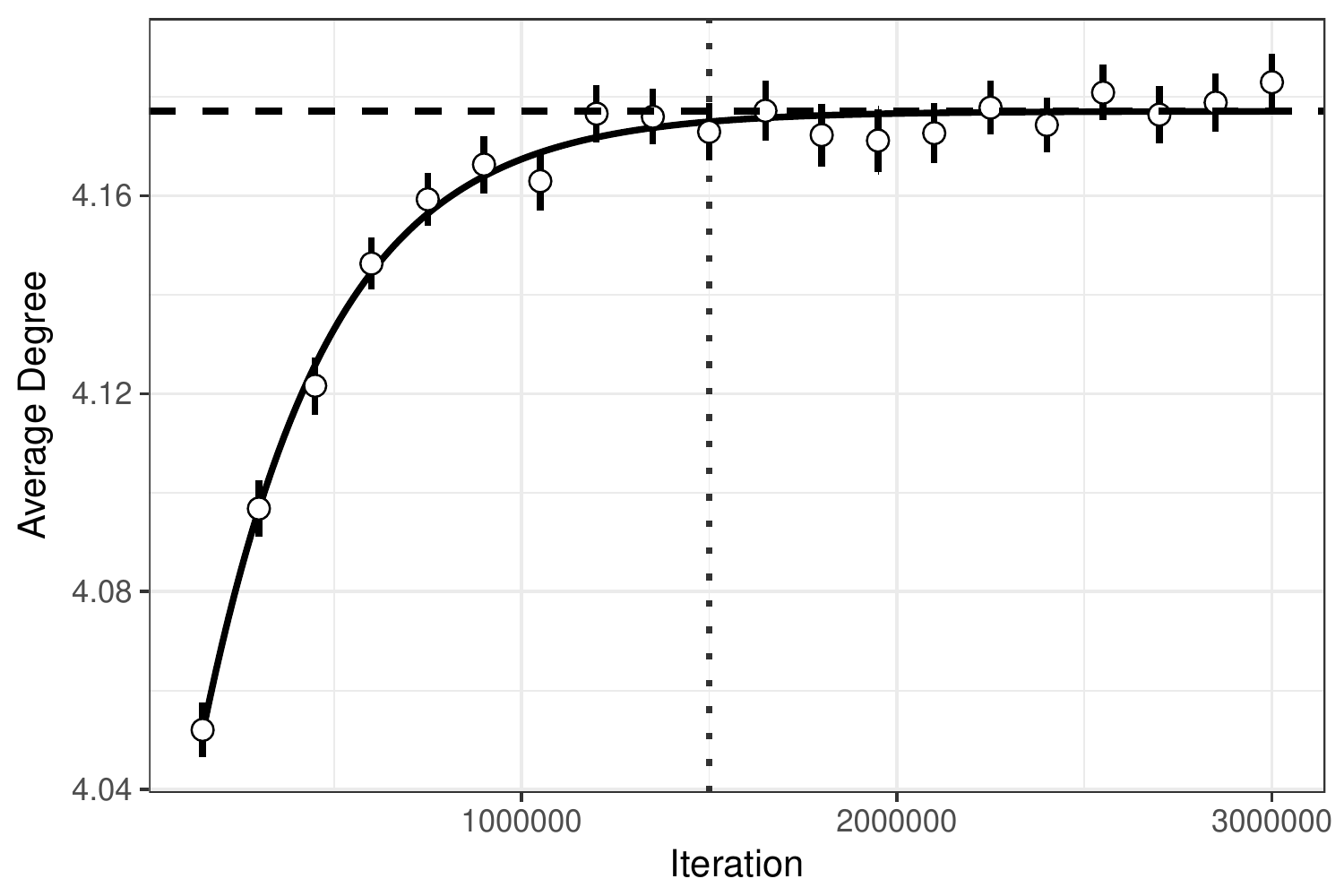}
\caption{Average degree over the MCMC process for a Waxman network with $N=1000$ nodes. The means of 200 runs are shown with 95\% confidence intervals. The solid fitted regression curve is shown, and the dashed line represents the fitted asymptote. Note that there is evidence for convergence at approximately 1.5 million iterations}
\label{fig:ave_degree_change}
\end{figure}

\begin{figure}
\centering
\includegraphics[width=0.8\columnwidth]{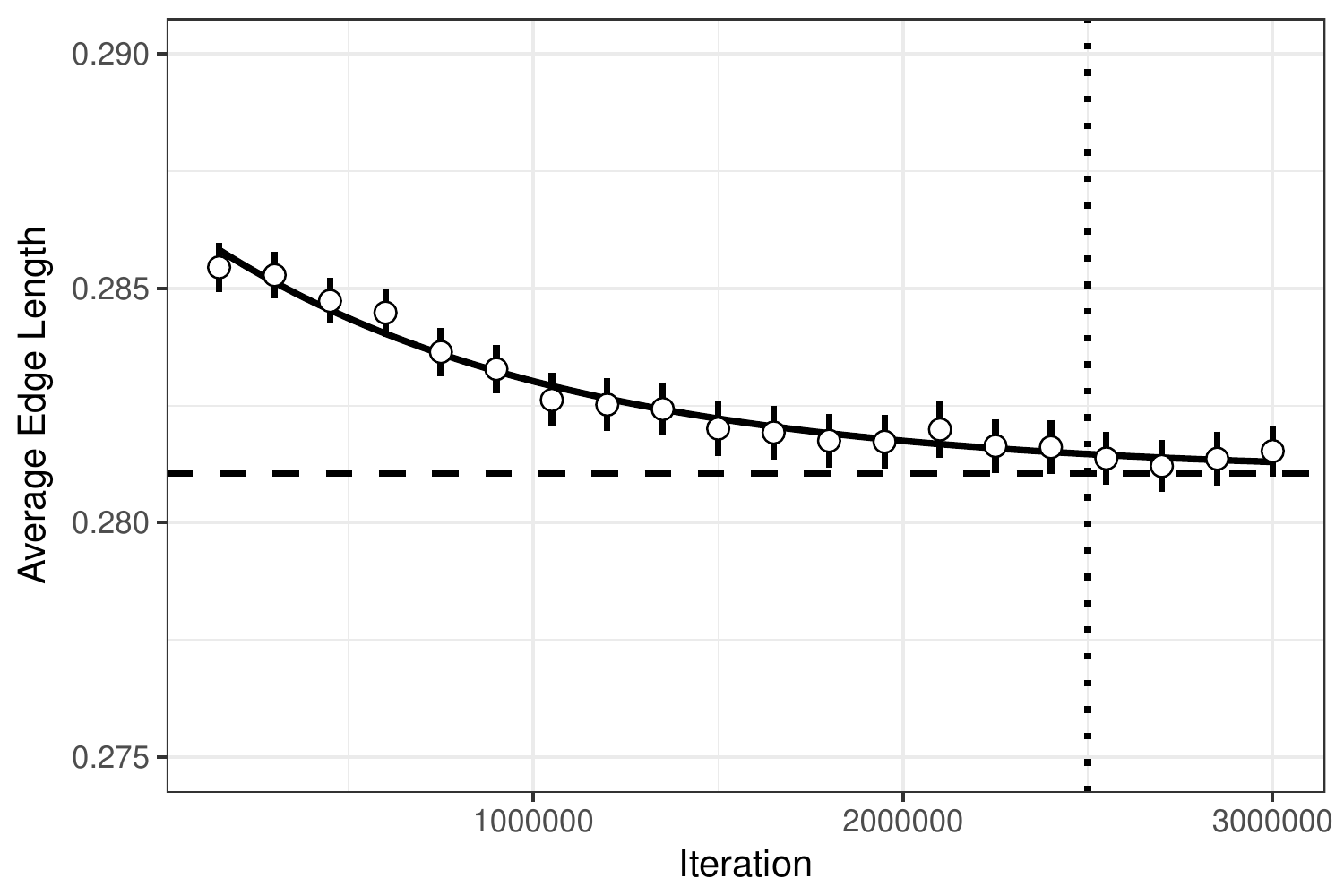}
\caption{Average edge length over the MCMC process for a Waxman network with $N=1000$ nodes. The means of 200 runs are shown with 95\% confidence intervals. The solid fitted regression curve is shown, and the dashed line represents the fitted asymptote. Note that there is evidence for convergence at approximately 2.5 million iterations}
\label{fig:ave_edge_change}
\end{figure}

Figure \ref{fig:ave_degree_change} shows the confidence intervals of average degree in 200 chains of the MCMC process, \ie values at intervals along the process in 200 runs of the algorithm. This demonstrates a steady increase in average degree as the algorithm progresses. We suggest that there is no significant change in average degree after 1.5 million iterations, and we have reached the average degree of $P\{G | G \mbox{ is connected} \}$. The average edge length, Figure \ref{fig:ave_edge_change} changes significantly but the magnitude of the change is much smaller. Additionally, it appears to converge slightly slower than the average degree, reaching within 99.9\% of the fitted asymptote at $\sim$2.5 million iterations. Therefore, we have evidence that the system has converged and we are sampling from the posterior distribution of connected Waxman networks.

\subsection{Iterations until convergence}
To estimate $K$, the number of steps required until convergence, we must investigate how the number of iterations to convergence scales with the number of nodes in the network. Therefore, determining convergence by eye is insufficient. We develop a framework to automate the process and give estimates of the required iterations to convergence.  
First, we use non-linear least squares in R \citep{R} to fit an exponential function to the average degree over the iterations and determine when the average degree distribution is no longer changing. The function, of the form
\begin{equation*}
f(x) = C + Ae^{-Bx},
\end{equation*}  
is fitted to the full data (not just the means) to determine the parameter $C$. This fitted parameter is used as the average degree of the target ensemble $P\{G | G \mbox{ is connected} \}$ after convergence, see Figure \ref{fig:ave_degree_change}. We define strong evidence for convergence to be when the fitted values are within 0.1\% of this value.
%

We apply this framework to the MCMC process for varying $N$ to determine the scaling of convergence. From the results in Figure \ref{fig: fitted_loglog} we note that the line of best fit is a power-law with an exponent of $1.99 \pm 0.04$. We conclude that the mixing time of this algorithm (number of iterations to convergence) is approximately $\mathcal{O}(N^2)$. This agrees with the theoretical analysis in Section~\ref{relaxation}. We note that we see the same results when fitting other functions, for example a logistic curve.

To provide further evidence for this $\mathcal{O}(N^2)$ complexity we conduct a similar analysis with the average edge length. We again fit an exponential model and as before the parameter C is the asymptote taken to be the average edge length of the target ensemble. The iterations until convergence, as calculated by the average edge length is shown in Figure \ref{fig: fitted_loglog} (triangles). The average edge length converges more slowly than the average degree. This is expected as some links cannot be removed until other links provide new paths through the network. It displays the same scaling, with the exponent of the line of best fit of $2.01\pm 0.06$, providing further evidence for convergence in approximately $\mathcal{O}(N^2)$.

Combining with results from Section~\ref{complexity}, the overall complexity of the algorithm is $\mathcal{O}(N^2)$.


\begin{figure}
\centering
\includegraphics[width=0.8\columnwidth]{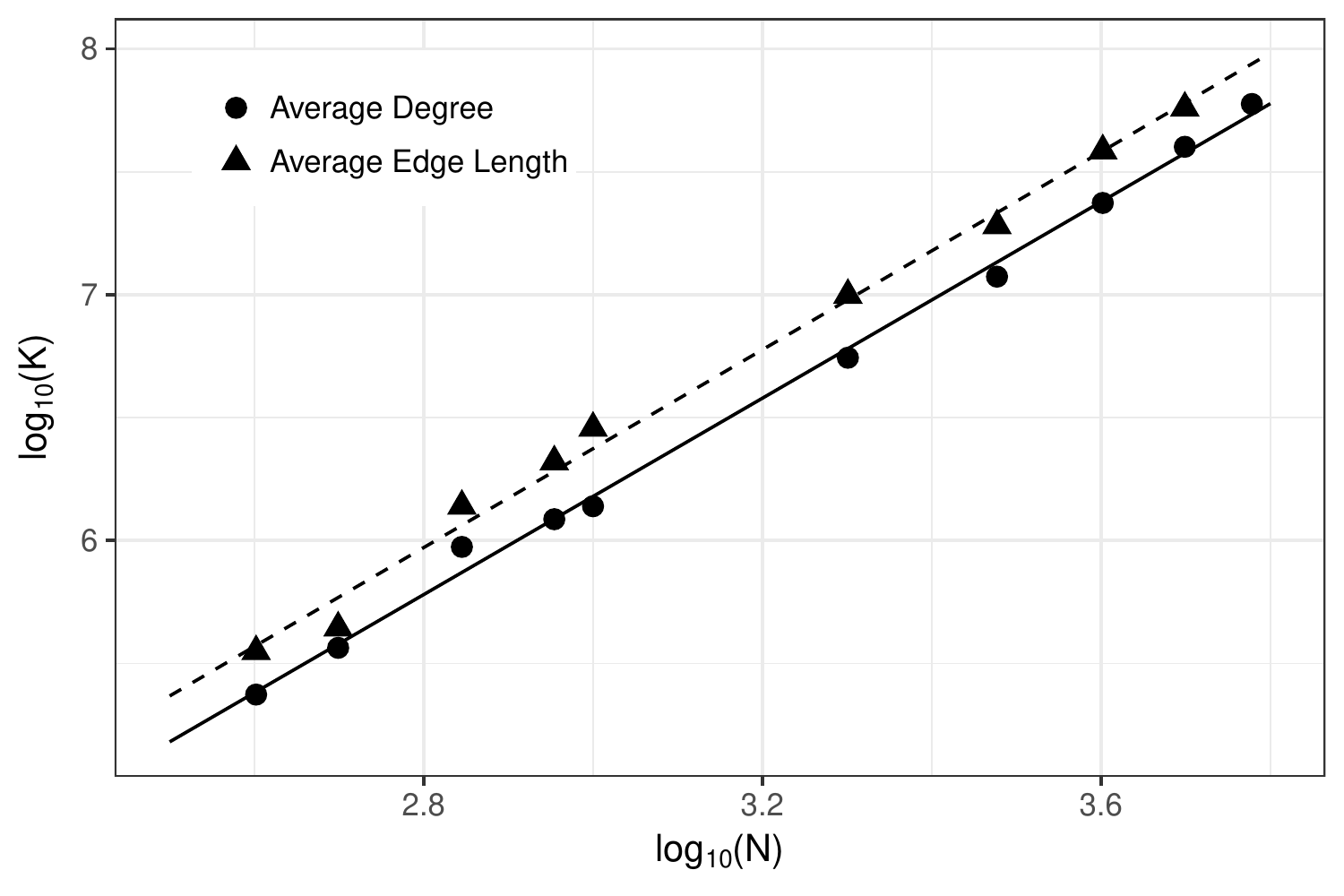}
\caption{Log-log plot of the iterations to convergence of the algorithm for varying size networks using average degree (circles) and average edge length (triangles) as the summary statistic. The slope of the fitted line for average degree (solid) is $1.99  \pm 0.04$, and for average edge length (dashed) is $2.01 \pm 0.06$. This supports the $\mathcal{O}(N^2)$ mixing time expected over the edges in a network.}
\label{fig: fitted_loglog}
\end{figure}

\section{Discussion}
We have introduced our algorithm in the context of generating connected networks. However, this method generalises to generate networks from the probability distribution given by
\begin{equation*}
P(G \vert G \mbox{ has some properties}),
\end{equation*}
assuming the properties can be tested. For example, generating a network without self loops or multi-edges would be easily implemented as above. Although we only condition on connectedness here, the process is not restricted to a single property, a set of properties can be used. 

It is worth noting that the proposal distribution can affect the properties that can be tested, $Q(G'|G)$. In this implementation, the proposal considers individual node pairs, and each step changes a single link. If we were to fix the number of triangles or exact degree sequence (\ie the configuration model) in the network, our proposal distribution would need to facilitate this. In these cases, an `edge swap' proposal in which the number of links remains constant, for example \citep{Gkantsidis-03, Viger-05}, would be an appropriate choice. There are many other constraints this method could be applied to in this form, or by changing the proposal. Other types of connectivity ($k$-connectivity) and using other models are natural extensions, and this type of method has applications in modelling many real world networks, for example ancestries where the relationships between animals must satisfy a variety of conditions.

The above algorithm assumes that the probability distribution of the network has the form in~(\ref{eq:prob_dist_SERN}). However, other probability distributions, for example that of exponential random graphs \citep{Lusher-12} can easily be used. Note that we must be able to calculate the ratio of densities of graphs that differ by one link.

We initialise the algorithm by simulating a graph from the model of interest; \eg the Waxman network, and connecting arbitrarily. However, any connected network can be used in this step as the MCMC process by design forgets the initial point of the Markov chain. This is particularly useful where the generation of the graph of interest is computationally expensive. However, starting `further' from the distribution of interest may increase time to convergence.

We have also focussed on the simulation of a single graph, assuming that multiple graphs can be sampled by running multiple instances. However, we can sample multiple graphs from the same chain. Thinning of the chain will need to be employed to create independent samples. We expect number of iterations until independent samples to be of the same order (not necessarily the same time) of mixing time, $\mathcal{O}(N^2)$. This is intuitive as each node pair must have the opportunity to change to create independent graphs.

A speed up heuristic, proposed by Gkantsidis \textit{et al.}~\citep{Gkantsidis-03} on a simple Markov Chain, attempts to reduce the requirement of checking connectedness by only running the check after $T$ `flip' transitions and rejecting if disconnected. This produces a concatenation of Markov Chains that maintain the required stationary distribution. This speed up factor could easily be applied here to the single link Metropolis-Hastings method. However, we only check for connectedness when the proposal removes a link, compared to every step. Rejecting all $T$ transitions (both link additions and removals) if the graph becomes disconnected would slow mixing. Hence, it is unlikely that this speed up method would produce the same dramatic increase in complexity observed in \citep{Viger-05}. Alternative connectivity algorithms present opportunities for improving complexity. Eppstein \textit{et al.}~\citep{Eppstein-97} present a dynamic connectivity check in $\mathcal{O}(\sqrt{N})$ per change in the graph. This is promising; however this is required at every addition or deletion of an edge, rather than only at deletion, so would not improve overall performance. These dynamic algorithms provide an opportunity to allow sampling of graphs with other properties, \eg $k$-connectivity.
\section{Conclusion \& Future Work}
This paper describes an algorithm to create random networks from a known ensemble conditioned on an extra desired property. We use a Bayesian framework, implemented with MCMC, to generate connected random networks. This implementation can be extended to include other properties of a network. We demonstrate the time complexity is $\mathcal{O}(N^2)$ with strong evidence of convergence to the desired ensemble. Future work includes applying this algorithm to other constraints and networks, and improving the efficiency of the algorithm. Extensions of the Metropolis-Hastings method, such as importance sampling, aim to improve mixing and complexity of convergence that could also be investigated in this context  \citep{Kroese-11}. 

\section*{Funding}
This work was supported by the Data to Decisions CRC (D2D CRC), the Cooperative Research Centres Programme; the ARC Center of Excellence for Mathematical and Statistical Frontiers (ACEMS), and an Australian Government Research Training Program (RTP) Scholarship.

\section{References}

\end{document}